\begin{document}

\pagestyle{myheadings} \setcounter{page}{1}

\mainmatter

\title{Feasibility of Portfolio Optimization under Coherent Risk Measures} 

\titlerunning{Feasibility of Portfolio Optimization $\ldots$}

\author{Imre Kondor\inst{1}\inst{2}\inst{4}
\and Istv\'an Varga-Haszonits\inst{2}\inst{3}\inst{5}}

\authorrunning{I. Kondor and I. Varga-Haszonits}

\institute{Collegium Budapest -- Institute for Advanced Studies,\\
Szenth\'aroms\'ag u. 2, H-1014 Budapest, Hungary \and
Department of Physics of Complex Systems,\\ E\"otv\"os University,\\ P\'azm\'any P\'eter s\'et\'any 1/A, H-1117 Budapest, Hungary\and
Analytics Department of Fixed Income Division,\\Morgan Stanley Hungary Analytics,\\ De\'ak Ferenc u. 15, H-1052 Budapest, Hungary\and
\email{kondor@colbud.hu}\and
\email{Istvan.Varga-Haszonits@morganstanley.com}}

\maketitle

\begin{abstract}
It is shown that the axioms for coherent risk measures imply that
whenever there is an asset in a portfolio that dominates the others in a
given sample (which happens with finite probability even for large
samples), then this portfolio cannot be optimized under any coherent
measure on that sample, and the risk measure diverges to minus infinity.
This instability was first discovered on the special example of Expected
Shortfall which is used here both as an illustration and as a prompt for
generalization.
\smallskip

\noindent {\bf Keywords.} Coherent Risk Measures, Portfolio Optimization, Estimation of Risk

\smallskip

\noindent {\bf J.E.L. classification.} \textsc{\small G11, C13, D81.}
\end{abstract}

\bigskip

\section{Introduction}

The proper mathematical characterization of risk is of central importance in finance: In the introductory chapter of a recent volume entirely dedicated to risk measures, Giorgio Szeg\"o called the upsurge of interest in risk measures from 1997 onwards the third major revolution in finance theory \cite{szego}.  The choice of the adequate risk measure is a highly nontrivial task that involves deep, in most cases implicit, considerations concerning the attitudes of market players, the structure of markets, the instruments in question, etc. No wonder that there is no universally accepted risk measure, the ones in use range from rather ad hoc characteristics to axiomatic constructs. Recently, value at risk (VaR) \cite{jorion,riskmetrics} has gained widespread use, in practice as well as in regulation. VaR has, however, been criticized by a number of academics, because as a quantile it has no reason to be convex, and indeed it is easy to construct portfolios for which VaR seriously violates convexity \cite{artzner2,acerbi1,embrechts}. The shortcomings of VaR and other considerations led Artzner et al. \cite{artzner1,artzner2} to the introduction of coherent risk measures. Independently and essentially simultaneously similar considerations were put forward by a number of researchers \cite{wang,hodges,carr} in various contexts. The seminal paper of Artzner et al. triggered a rich branch of research with the number of papers inspired by it ranging in the hundreds. Expected Shortfall (ES), a particularly appealing risk measure, was quickly demonstrated to be coherent \cite{acerbi2,acerbi3}, and Acerbi \cite{acerbi4} went on to construct a whole class of coherent risk measures called spectral measures of which ES was a special case. ES is essentially the conditional average loss above a high threshold. As such, it is easily related to VaR that can be identified with this threshold (although for ES this cutoff is defined in probability, not in money). This explains why ES is often identified with conditional value at risk CVaR, although for discrete distributions there is a subtle but significant difference between the two (see \cite{acerbi3} for a careful analysis of the relationship between them). As a conditional average, ES is also a more plausible measure than VaR, it is easy to determine from empirical distributions, and the optimization under ES can be reduced to a linear programming task \cite{rockafellar}. All these desirable features have made ES very popular among academics.

Therefore it came as a complete surprise for us when we recognized a few years ago that the optimization of some portfolios, especially the large ones, was not always feasible under ES: in some of the samples the risk measure became unbounded from below \cite{kondor1}. This is a probabilistic phenomenon: it depends on the sample. It is natural to ask whether this is a special pathology of ES or other risk measures also display this instability. The purpose of the present paper is to answer this question.

The rest of the paper is organized as follows. In Section \ref{sec:es} we recapitulate some of the results related to the instability of ES. We also display the simple financial content of this mathematical phenomenon which will immediately make it obvious that the instability must be a common property of every coherent measure. A formal proof of this is provided in Section \ref{sec:crm}. The paper ends on a short Summary.

\section{The Instability of Expected Shortfall}
\label{sec:es}

Let us begin with a special case of ES corresponding to the limit where the cutoff $\alpha$ beyond which the conditional average is calculated goes to unity. Then only the single worst loss will contribute and the optimization over the portfolio weights under this risk measure consists in finding the best linear combination of the worst losses. This risk measure is sometimes called the Maximal Loss  (ML) or Minimax measure \cite{young}. For this particular measure, and assuming that the returns follow an elliptical distribution, it proved possible to exactly determine the probability of having a solution as function of the portfolio size $N$ and the sample size $T$ \cite{kondor1} with the following result:
\begin{equation}
p(N,T)=\Theta(T-N)\cdot\frac{1}{2^{T-1}}\sum_{k=N-1}^{T-1}C^k_{T-1}\label{eq:minimaxprob}
\end{equation}
where $\Theta(x)$ is the step function (equal to 1 if $x\ge1$ and equal to 0, if $x<0$), and the symbol $C^k_n$ represents the binomial coefficients.

As shown by \eqref{eq:minimaxprob}, for short time series $T<N$ the optimization problem never has a solution, but this is rather natural: if we do not have enough data it is impossible to find an optimum. Other, more conventional risk measures (variance and mean absolute deviation or MAD \cite{konno}) also fail to allow for a solution in that parameter range. The remarkable fact about \eqref{eq:minimaxprob} is that for finite $N$ and $T$ the probability of having a solution is less than unity even for $T>N$.  In the limit $N$, $T$ going to infinity with their ratio fixed, the probability distribution becomes a step function: for $N/T<1/2$ it goes to 1, for $N/T>1/2$ to zero.

For the case of ES with a cutoff parameter $\alpha$ and again in the limit $N$, $T$ going to infinity, we find that the critical value of the ratio $N/T$ separating the region where the problem always has a solution from the one where it does not, varies with $\alpha$. This way, we get a kind of phase diagram separating the region where the optimization problem is feasible from that where it is not. This curve is illustrated in Fig. \ref{fig:es}. As can be seen, the curve is sloping down from right to left, so the critical value of $N/T$ where the optimization task becomes unfeasible is decreasing with the cutoff $\alpha$. The phase diagram has been numerically determined in \cite{kondor1}. Subsequently it even proved to be possible to exactly calculate this phase diagram \cite{ciliberti} via the method of replicas, borrowed from the theory of random systems. The calculation shows that the curve behaves in a non-analytic fashion near $\alpha=1$, the drop from $1/2$ being exponentially small for $\alpha$'s in the vicinity of 1. Since the $\alpha$'s of practical interest are fairly close to 1 (such as $.95$ or $.99$) we can say that the critical value of $N/T$ is very close to $1/2$ for any practically interesting case.
\begin{figure}
\includegraphics[width=10cm]{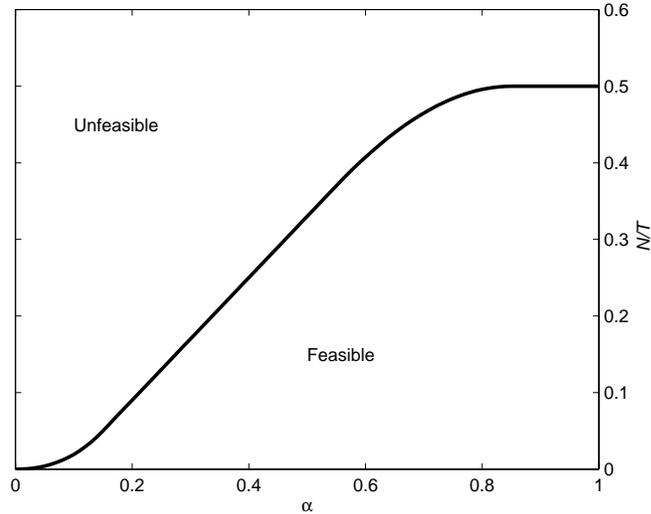}
\caption{\label{fig:es} The phase diagram of Expected Shortfall. In the region called 'Unfeasible' in the figure the optimization under ES does not have a solution. The transition across the line separating the feasible and unfeasible regions is sharp if $N$,$T$ go to infinity with their ratio kept fixed.}
\end{figure}

To put this phenomenon in perspective it is worth making a comparison with the case of the variance. As long as $T>N$ the covariance matrix is always positive definite, so the optimization can always be carried out. It is well known, however, that the rank of the covariance matrix is the smaller of $N$ and $T$, so when $T$ goes to $N$ from above the covariance matrix develops a zero eigenvalue, and the curvature of the risk measure becomes zero in a certain direction. With this the optimization problem loses its meaning. As $T$ decreases further, more and more such neutral directions appear, but the variance as a risk measure never becomes unbounded. The behavior of ES as a risk measure is both similar to and different from this. For finite $N<T$ there are always samples for which the optimization does not have a solution, but the reason is not the appearance of neutral directions, instead the risk measure ES becomes unbounded from below. For $N/T$ small the probability of such samples is small. As $N/T$ approaches its critical value, the probability of samples for which the optimization is unfeasible is growing and as we go above the critical $N/T$ it fast approaches 1. Finally, when $N/T$ goes above 1 all the samples become unfeasible. For $N,T\to\infty$ with their ratio fixed, this transition becomes sharp: below the critical value of $N/T$ all the samples are feasible, above none of them.

Note that the critical value of $N/T$ is always smaller for ES than for either the variance or MAD, so ES is more sensitive to sample to sample fluctuations (noise) than either the variance or MAD \cite{kondor1}.  The phase diagram does not only separate the regions where the optimization under ES can or cannot be carried out, but it is also the locus where the estimation error diverges. All these features and several more details were thoroughly discussed in \cite{kondor1}, but the financial content of this instability was left somewhat implicit.

As a matter of fact, the message is almost self-evident. Let us just consider the simplest case imaginable: $N=T=2$ for the Minimax problem. Let $x_{it}$ be the returns on securities $i=1,2$ at times $t=1,2$, and let $w_i$ ($i=1,2$) be the portfolio weights. Then the task is to find the weights $w_i$ that make the maximal loss
\begin{equation}
\max_{t}\left(-\sum_{i=1}^Nw_ix_{it}\right)\label{eq:maxloss}
\end{equation}
minimal. As the weights sum to 1 we let $w_1=w$, and $w_2=1-w$. Then the losses of the portfolio over periods $t=1,2$ are the following:
\begin{gather*}
y_1=-wx_{11}-(1-w)x_{21}=w(x_{21}-x_{11})-x_{21},\\
y_2=-wx_{12}-(1-w)x_{22}=w(x_{22}-x_{12})-x_{22}.
\end{gather*}
Whether $\max\{y_1,y_2\}$ is bounded from below or not, depends on the slopes of these two straight lines, see Fig. \ref{fig:mm}.
\begin{center}
\begin{figure}
\includegraphics[width=12cm]{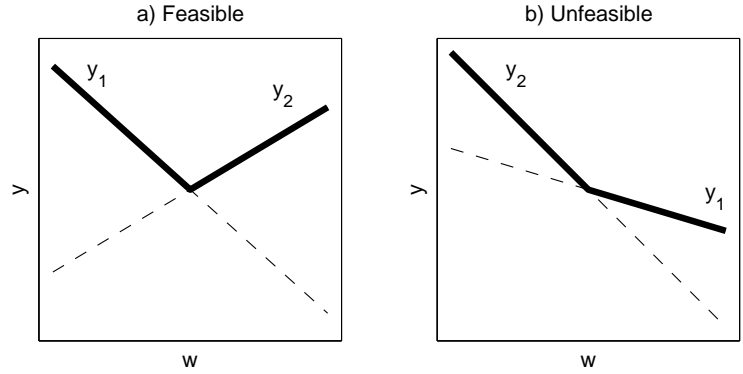}
\caption{\label{fig:mm}The losses $y_1$ and $y_2$ (dashed lines) and their maximum (heavy lines) as a function of the portfolio weight $w$. Panel a) shows the case where the maximal loss is bounded from below so we have a minimum, while in Panel b) the maximal loss is unbounded, hence there is no minimum.}
\end{figure}
\end{center}
If the two slopes are of opposite sign, $\max\{y_1,y_2\}$ is bounded from below, and the minimax problem has a solution. If the two slopes are of the same sign, there is no solution, the minimum runs away to infinity, and the value of the objective function in \eqref{eq:maxloss} goes to minus infinity. When do the two slopes have the same sign? This happens if
\begin{enumerate}[(a)]
\item\label{item:case_a} either $x_{21}>x_{11}$ {\it and} $x_{22}>x_{12}$,
\item\label{item:case_b} or $x_{21}<x_{11}$ {\it and} $x_{22}<x_{12}$.
\end{enumerate}
In case (\ref{item:case_a}) asset $i=2$ dominates asset $i=1$ in the sense that the return on asset $i=2$ is larger than the return on asset $i=1$ for all $t=1,2$, and the solution runs away to infinity such that $w_2\to+\infty$ and $w_1\to-\infty$. In the opposite case (\ref{item:case_b}) asset $i=1$ dominates and the solution runs away to infinity with $w_1\to+\infty$ and $w_2\to-\infty$. But this is obvious! If the return on one of the assets is always larger than that on the other, we have to go infinitely long in the dominating asset, and infinitely short in the other one, and this will produce an infinitely large negative risk, i.e. no risk at all.

The generalization to ES and beyond is immediate: If we have several assets in the portfolio and we have larger, but finite samples (i.e. finite time series segments), it may always happen that one asset, or a combination of assets, dominate the rest {\it in the given sample}, even if they do not dominate them in general. In such a case the optimization of portfolios under a risk measure that penalize large losses more severely than smaller losses will inevitably lead to the result that we have to go infinitely long in the dominating asset(s) and infinitely short in the rest, thereby ending up with the illusion of a portfolio without risk. Of course, all this strongly depends on the size of the portfolio and on the size of the sample. For $N$ increasing it is easier and easier to find a dominating combination, while for an increasing sample $T$ it is less and less probable that an atypical fluctuation may lead us to a false conclusion. This is evidently born out by the features displayed in Fig. \ref{fig:es}: for a given $\alpha$ the feasibility of the task is lost when $N/T$ exceeds a certain threshold, while for a given $N/T$ the feasibility is lost with decreasing $\alpha$, as we are allowing, in addition to the worst case, a larger and larger part of the worst cases to enter into competition.

Runaway solutions may, of course, arise only if we allow arbitrarily large short positions. The main message does not change much, however, even if we impose a ban on short selling, or any other constraint on the weights. For finite $T$ there will always be a chance for one or the other security to dominate, and then the optimization will lead us to the conclusion that we should go long in the dominating item to the maximum extent allowed by the constraints. This will not result in infinite positions, nevertheless the solution will still lack any stability: in the next sample another security may dominate, so the solution will jump around on the 'walls' of the allowed region. A related instability has been studied in the case of the variance in \cite{gulyas}.
The insight gained on the above example leads us to the conjecture that coherent measures may, in general, be sensitive to this type of instability for finite samples. This will, indeed, turn out to be the case, as shown by the formal argument developed in the next Section.

\section{The instability of coherent measures}
\label{sec:crm}

In addition to the notation we have already introduced, that is $N$ for the number of assets in the portfolio, $T$ for the number of observation periods, $x_{it}$ the return on asset $i$ over the observation period $t$, and $w_i$ the portfolio weight of asset $i$, we are also going to use the notation $\mathbf{w}$ for the $N$ dimensional vector of portfolio weights and $\mathbf{X}$ for the $N\times T$ matrix of the observed returns (i.e. the matrix with entries $x_{it}$ ).

In portfolio optimization it is customary to assume that the sum of all portfolio weights is unity. Hence, to simplify discussion we are going to use the following definition:
\begin{definition}{Normalized portfolio.} We say that a portfolio $\mathbf{w}$ is normalized, if
\begin{equation}
\sum_{i=1}^Nw_i=1.
\end{equation}
\end{definition}

\subsection{Dominant portfolios}

As explained in the previous section, the minimization of ES is not feasible, if there is a dominating combination of assets on the given sample. In order to spell out this assertion in a more formal and more general way, let us define exactly what we mean by {\it domination}.
\begin{definition}\label{def:dominance}
Let $\mathbf{u}$ and $\mathbf{v}$ be two portfolio vectors, moreover, let $\mathbf{u}'$ and $\mathbf{v}'$ be their normalized counterparts (i.e. they are normalized and parallel to $\mathbf{u}$ and $\mathbf{v}$ respectively). We say that $\mathbf{u}$ dominates $\mathbf{v}$ on the sample $\mathbf{X}$, if for all $t=1,2,...,T$ we have
\begin{equation}
\sum_{i=1}^Nu'_ix_{it}\ge\sum_{i=1}^Nv'_ix_{it},\label{eq:domin}
\end{equation}
The dominance of $\mathbf{u}$ over $\mathbf{v}$ is denoted by $\mathbf{u}\succeq\mathbf{v}$.

Furthermore, we say that $\mathbf{u}$ strictly dominates $\mathbf{v}$ on the sample $\mathbf{X}$ if $\mathbf{u}\succeq\mathbf{v}$ and in addition there is at least one observation $1\le t\le T$ such that
\begin{equation}
\sum_{i=1}^Nu'_ix_{it}>\sum_{i=1}^Nv'_ix_{it}.\label{eq:strictdomin}
\end{equation}
The strict dominance of $\mathbf{u}$ over $\mathbf{v}$ is denoted by $\mathbf{u}\succ\mathbf{v}$.
\end{definition}
Obviously, parallel portfolio vectors are equivalent, but in order to compare the performance of two portfolios we have to make sure that both of them are normalized --- this is why the dominance conditions \eqref{eq:domin} and \eqref{eq:strictdomin} should apply to $\mathbf{u}'$ and $\mathbf{v}'$ rather than $\mathbf{u}$ and $\mathbf{v}$. Definition \ref{def:dominance} simply states that a portfolio dominates another one on a given sample, if the return on the first is not smaller than that on the second for any observation. If it is higher for at least one (or more) observations then this dominance is strict.

It is important to emphasize that the dominance (or strict dominance) relation, as defined above, between two portfolios is specific to the underlying sample. Clearly, if two vectors form a dominant-dominated pair for (almost) all possible samples then we have an arbitrage opportunity. At the same time, it is obvious that dominance on a single sample does not imply arbitrage: even if $\mathbf{u}\succ\mathbf{v}$ on sample $\mathbf{X}_1$ (giving the false illusion of an arbitrage opportunity) it is quite possible that neither of them dominates the other on another sample $\mathbf{X}_2$.\footnote{In fact, the notation $\mathbf{u}\succ\mathbf{v}$ is somewhat sloppy and $\mathbf{u}\succ_{\mathbf{X}}\mathbf{v}$ would be more appropriate. However, we refrain from this rather awkward notation, as it will always be clear on what sample we investigate dominance.}

\subsection{Optimization of coherent risk measures}

Let us assume that we have a coherent risk measure $\rho$ and we want to optimize our portfolio under this measure, on the basis of data making up the sample $\mathbf{X}$. Since we do not know the true data generating process, we cannot tell exactly what the risk of a portfolio $\mathbf{w}$ is, but we can estimate it on the basis of the given sample. Let us denote this estimate by $\hat{\rho}\left(\mathbf{w}|\mathbf{X}\right)$. Since $\rho$ is coherent, we require that its estimator $\hat{\rho}$ also be coherent on the sample $\mathbf{X}$. Let us recall the axioms of coherence and adapt them to the present context:
\begin{definition}
We say that the risk measure estimator $\hat{\rho}$ is coherent on sample $\mathbf{X}$, if it fulfills all of the following requirements (where $\mathbf{0}$ denotes the null-vector):
\begin{enumerate}
\item {\it Monotonicity:} $\mathbf{u}\succeq\mathbf{0}\Rightarrow \hat{\rho}\left(\mathbf{u}|\mathbf{X}\right)\le0$
\item {\it Sub-additivity:} $\hat{\rho}(\mathbf{u}+\mathbf{v}|\mathbf{X})\le\hat{\rho}(\mathbf{u}|\mathbf{X})+\hat{\rho}(\mathbf{v}|\mathbf{X})$
\item {\it Positive homogeneity:} $a>0\Rightarrow\hat{\rho}(a\mathbf{u}|\mathbf{X})=a\hat{\rho}(\mathbf{u}|\mathbf{X})$
\item {\it Translational invariance:} $a\in\mathbb{R}\Rightarrow\hat{\rho}(\mathbf{u}|\mathbf{X}+a)=\hat{\rho}(\mathbf{u}|\mathbf{X})-a$
\end{enumerate}
where $\mathbf{X}+a$ denotes the matrix obtained by adding the number $\mathrm{a}$ to each element of $\mathbf{X}$.
\end{definition}
The global risk minimization problem for the finite sample $\mathbf{X}$ is
\begin{gather}
\min_{\mathbf{w}}\hat{\rho}(\mathbf{w}|\mathbf{X}),\label{eq:crmopt}\\
\sum_{i=1}^Nw_i=1.\nonumber
\end{gather}
The following theorem states that the existence of an optimal portfolio is not automatically guaranteed, and it is a property of the underlying sample.
\begin{theorem}\label{theorem:crm}
\label{theor:crmstab}
If there exist two portfolios $\mathbf{u}$ and $\mathbf{v}$ so that $\mathbf{u}\succ\mathbf{v}$ then \eqref{eq:crmopt} has no solution.
\end{theorem}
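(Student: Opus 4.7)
The plan is to exhibit an explicit one-parameter family of normalized portfolios along which the estimator $\hat{\rho}$ diverges to $-\infty$; that alone is enough to rule out a minimizer in \eqref{eq:crmopt}. Motivated by the two-asset picture of Section~\ref{sec:es}, where the runaway direction is ``long the dominator, short the dominated'', I would take
\[
\mathbf{w}(\lambda)\;:=\;\lambda\mathbf{u}'+(1-\lambda)\mathbf{v}'\;=\;\mathbf{v}'+\lambda(\mathbf{u}'-\mathbf{v}'),\qquad \lambda>0,
\]
with $\mathbf{u}',\mathbf{v}'$ the normalized versions of the two portfolios. Because $\sum_i u'_i=\sum_i v'_i=1$, the spread $\mathbf{u}'-\mathbf{v}'$ has weights summing to zero, so $\mathbf{w}(\lambda)$ stays normalized (hence feasible) for every $\lambda$, while its returns at time $t$ equal $R_t(\mathbf{v}')+\lambda d_t$, where $d_t:=R_t(\mathbf{u}')-R_t(\mathbf{v}')\ge 0$ for every $t$ and $d_{t_0}>0$ at some $t_0$ by strict dominance.

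The mechanical part of the argument is the bound obtained by applying sub-additivity to the decomposition $\mathbf{w}(\lambda)=\mathbf{v}'+\lambda(\mathbf{u}'-\mathbf{v}')$ and then pulling the factor $\lambda$ out through positive homogeneity,
\[
\hat{\rho}\bigl(\mathbf{w}(\lambda)\,\big|\,\mathbf{X}\bigr)\;\le\;\hat{\rho}(\mathbf{v}'|\mathbf{X})+\lambda\,\hat{\rho}(\mathbf{u}'-\mathbf{v}'|\mathbf{X}).
\]
Since the per-period returns of $\mathbf{u}'-\mathbf{v}'$ are the non-negative $d_t$, the monotonicity axiom immediately supplies the weak inequality $\hat{\rho}(\mathbf{u}'-\mathbf{v}'|\mathbf{X})\le 0$. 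If this can be sharpened to $\hat{\rho}(\mathbf{u}'-\mathbf{v}'|\mathbf{X})\le-\delta$ for some $\delta>0$, then the right-hand side tends to $-\infty$ as $\lambda\to\infty$ and the theorem follows at once.

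The heart of the argument is therefore the strict-negativity step, since by themselves the four axioms give only weak monotonicity. My plan is to extract strictness from $d_{t_0}>0$ together with translational invariance: a scalar shift of the sample, absorbed through the translational axiom, lets the ``tied'' periods (those with $d_t=0$) be separated from the strictly favorable one at $t_0$, and combining this separation with positive homogeneity and sub-additivity should force $\hat{\rho}(\mathbf{u}'-\mathbf{v}'|\mathbf{X})$ strictly below zero. I expect this to be the hardest part of the proof; once it is secured, sub-additivity and positive homogeneity close the argument mechanically and deliver the unboundedness of the objective along the family $\{\mathbf{w}(\lambda)\}$.
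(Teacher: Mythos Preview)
Your approach is essentially the paper's: form a normalized ray in the spread direction $\mathbf{u}'-\mathbf{v}'$, bound $\hat\rho$ along it via sub-additivity and positive homogeneity, and drive the bound to $-\infty$ once $\hat\rho(\mathbf{u}'-\mathbf{v}'\mid\mathbf{X})<0$. The only cosmetic difference is the base point of the ray: the paper frames the argument as a proof by contradiction, assumes a hypothetical minimizer $\mathbf{w}^{(0)}$, and exhibits $\mathbf{w}^{(1)}=\mathbf{w}^{(0)}+a(\mathbf{u}'-\mathbf{v}')$ with strictly smaller risk, whereas you start from $\mathbf{v}'$ and let $\lambda\to\infty$. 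Both versions use the same decomposition and the same three axioms in the same way.

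On the step you single out as hardest---the strict sign $\hat\rho(\mathbf{u}'-\mathbf{v}'\mid\mathbf{X})<0$---the paper disposes of it in one line: from $\mathbf{u}'-\mathbf{v}'\succ\mathbf{0}$ it invokes the monotonicity and translational invariance axioms together and declares the strict inequality. It does not carry out the separation of tied versus strictly favourable periods that you sketch; in the paper's treatment this step is taken as immediate rather than as the crux. So your instinct about \emph{which} axioms are needed (monotonicity plus translational invariance, not monotonicity alone) matches the paper exactly, but the paper does not regard the passage from $\succ$ to a strict sign as requiring any further work.
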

\begin{proof}
Let us use proof by contradiction and assume that $\mathbf{w}^{(0)}$ is the (normalized) minimal risk portfolio. Without loss of generality we can assume that $\mathbf{u}$ and $\mathbf{v}$ are normalized. Let $\mathbf{0}$ denote the null vector (empty portfolio), then clearly $\mathbf{u}-\mathbf{v}\succ\mathbf{0}$, hence, due to the monotonicity and translational invariance axioms, we get that
\begin{equation}
\hat{\rho}(\mathbf{u}-\mathbf{v}|\mathbf{X})<0.\label{eq:dominance}
\end{equation}
Let us now consider the following portfolio:
\begin{equation*}
\mathbf{w}^{(1)}=\mathbf{w}^{(0)}+a\left(\mathbf{u}-\mathbf{v}\right),
\end{equation*}
where $a$ is a positive real number. It is straightforward to see that $\mathbf{w}^{(1)}$ is also normalized. Using the sub-additivity and positive homogeneity axioms:
\begin{equation*}
\hat{\rho}\left(\mathbf{w}^{(1)}|\mathbf{X}\right)=\hat{\rho}\left(\mathbf{w}^{(0)}+a\left(\mathbf{u}-\mathbf{v}\right)|\mathbf{X}\right)
\le\hat{\rho}\left(\mathbf{w}^{(0)}|\mathbf{X}\right)+a\hat{\rho}\left(\mathbf{u}-\mathbf{v}|\mathbf{X}\right).
\end{equation*}
Because of \eqref{eq:dominance} and the positivity of $a$ the last term on the right hand side is negative. Hence we get:
\begin{equation*}
\hat{\rho}\left(\mathbf{w}^{(1)}|\mathbf{X}\right)<\hat{\rho}\left(\mathbf{w}^{(0)}|\mathbf{X}\right),
\end{equation*}
which obviously contradicts the assumption that $\mathbf{w}^{(0)}$ is optimal.
\end{proof}

This theorem states that the existence of a dominant-dominated portfolio pair is a sufficient condition for the non-existence of the optimum. This condition, however, is not necessary in general: the optimization problem \eqref{eq:crmopt} may be unfeasible even if there is no dominating portfolio.

One important exception is the minimization of the Maximal Loss measure (Minimax problem), which is formally:
\begin{gather}
\min_{\mathbf{w}}\max_{1\le t\le T}\left(-\sum_{i=1}^Nw_ix_{it}\right),\label{eq:minimax}\\
\sum_{i=1}^Nw_i=1,\nonumber
\end{gather}

\begin{theorem}\label{theorem:minimax}
The optimization problem \eqref{eq:minimax} has no solution, if and only if there exists a pair of portfolios so that one of them strictly dominates the other.
\end{theorem}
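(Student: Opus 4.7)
The plan is to prove the two directions separately, with the ``if'' direction being immediate and the ``only if'' direction requiring a constructive extraction of a dominating pair from a direction of unboundedness.

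\smallskip

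For the ``if'' direction, I would observe that the Maximal Loss functional $\hat{\rho}_{ML}(\mathbf{w}|\mathbf{X}) = \max_{1\le t\le T}(-\sum_i w_i x_{it})$ is itself a coherent risk measure estimator on $\mathbf{X}$ (monotonicity, positive homogeneity and translational invariance are immediate from the definition, while sub-additivity follows from the standard fact that $\max_t(a_t+b_t)\le \max_t a_t + \max_t b_t$). Hence Theorem~\ref{theorem:crm} applies directly: the existence of a strictly dominating pair $\mathbf{u}\succ\mathbf{v}$ already precludes a minimizer.

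\smallskip

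For the ``only if'' direction, my plan is to reformulate \eqref{eq:minimax} as the linear program: minimize $z$ subject to $-\sum_i w_ix_{it}\le z$ for all $t$ and $\sum_i w_i=1$. This LP is always feasible (start from any normalized $\mathbf{w}$ and take $z$ equal to its maximal loss), and its objective is bounded below precisely when \eqref{eq:minimax} has a solution, in which case the LP attains its optimum by the fundamental theorem of linear programming. Therefore, if \eqref{eq:minimax} has no solution, the LP must be unbounded below, which gives the existence of a feasible recession direction $(\mathbf{d},\delta)$ with $\delta<0$, $\sum_i d_i=0$ and $-\sum_i d_i x_{it}\le \delta$ for every $t$. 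Equivalently, there is a vector $\mathbf{d}\in\mathbb{R}^N$ with $\sum_i d_i=0$ and
\begin{equation*}
\sum_{i=1}^N d_i x_{it}>0\quad\text{for every }t=1,\dots,T.
\end{equation*}

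\smallskip

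The final step is to convert $\mathbf{d}$ into a concrete dominating pair of normalized portfolios. I would split $\mathbf{d}=\mathbf{d}^+-\mathbf{d}^-$ into its positive and negative parts (componentwise, with disjoint supports). Since $\sum_i d_i=0$, the two sums $s:=\sum_i d_i^+=\sum_i d_i^-$ coincide, and $s>0$ because $\mathbf{d}\ne\mathbf{0}$ (otherwise $\sum_i d_i x_{it}=0$, contradicting the strict inequality). Setting $\mathbf{u}':=\mathbf{d}^+/s$ and $\mathbf{v}':=\mathbf{d}^-/s$ yields two normalized portfolios satisfying
\begin{equation*}
\sum_{i=1}^N u'_i x_{it}-\sum_{i=1}^N v'_i x_{it}=\frac{1}{s}\sum_{i=1}^N d_i x_{it}>0
\end{equation*}
for every $t$, so $\mathbf{u}'\succ\mathbf{v}'$ in the sense of Definition~\ref{def:dominance}. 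I expect the main obstacle to be the LP step: one must invoke (and briefly justify) the fact that a feasible but unsolvable LP admits a recession direction strictly decreasing the objective, as opposed to an infimum that is merely not attained --- this is exactly what rules out the pathological piecewise-linear convex case and lets the argument close cleanly.
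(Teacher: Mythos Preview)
Your proof is correct, and the ``if'' direction is identical to the paper's. For the ``only if'' direction, however, you take a genuinely different route.

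The paper also starts from the LP reformulation and the observation that a feasible LP without a solution must be unbounded below, but then proceeds by contradiction rather than via recession directions: it lets $x_{i_1t_1}$ be the largest entry of $\mathbf{X}$, uses unboundedness to produce a normalized portfolio $\mathbf{w}^{(1)}$ whose return exceeds $x_{i_1t_1}$ at every time, and notes that this forces $\mathbf{w}^{(1)}$ to strictly dominate each single asset $e_i$, in particular $e_{i_1}$, contradicting the no-dominance hypothesis. This argument is slightly more elementary in that it avoids explicitly invoking the recession-cone characterization of LP unboundedness, and it exhibits the dominated portfolio as a single asset. Your approach, by contrast, is fully constructive: from a recession direction $\mathbf{d}$ you directly manufacture the dominating pair as the normalized long and short legs $\mathbf{d}^+/s$ and $\mathbf{d}^-/s$, which has a clean financial reading (the two sides of the apparent arbitrage) and incidentally yields portfolios with nonnegative weights and strict dominance at every observation. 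Both arguments need the same LP fact you flag at the end --- that for a linear program ``no optimum'' means ``unbounded'', not merely ``infimum not attained'' --- so neither is lighter on that count.
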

\begin{proof}
Due to Theorem \ref{theor:crmstab} we only have to prove that there must be a pair of dominating-dominated portfolios, when the Minimax problem has no solution. Let us assume the contrary: there is no portfolio that strictly dominates any other portfolios. It can be shown \cite{young} that the Minimax problem is equivalent to the following linear programming task:
\begin{gather}
\min u\nonumber\\
u\ge-\sum_{i=1}^Nw_ix_{it}\label{eq:minimax2}\\
\sum_{i=1}^Nw_i=1\nonumber
\end{gather}
Therefore, having no solution means that for any real number $u$ there is a normalized portfolio $\mathbf{w}$ such that $u>-\sum_{i=1}^Nw_ix_{it}$ for any $t=1,2,...,T$. (That, is the portfolio return is not bounded in the space of portfolios.)

More specifically, let us consider the maximum of all returns in our sample $x_{it}$ and let the indices of this (or one of these) maximal observation(s) be $i_1$ and $t_1$. (Since the number of assets and the number of observations are both finite, this maximum must exist, although it may not be unique.) Letting $u=-x_{i_1t_1}$ we have a portfolio $\mathbf{w}^{(1)}$ such that $x_{i_1t_1}<\sum_{i=1}^Nw_i^{(1)}x_{it}$ for any $t=1,2,...,T$ so eventually:
\begin{equation}
x_{i_1t_1}<\min_{1\le t\le T}\left(\sum_{i=1}^Nw_i^{(1)}x_{it}\right)\label{eq:i1t1}
\end{equation}
Since there is no dominating portfolio, for $\mathbf{w}^{(1)}$ there exist indices $1\le i_2\le N$ and $1\le t_2\le T$ such that $x_{i_2t_2}>\sum_{i=1}^Nw_i^{(1)}x_{it_2}$ (if that were not true $\mathbf{w}^{(1)}$ itself would dominate all of the $N$ assets, that is, any possible portfolio). Therefore:
\begin{equation}
x_{i_2t_2}>\min_{1\le t\le T}\left(\sum_{i=1}^Nw_i^{(1)}x_{it}\right).\label{eq:i2t2}
\end{equation}
\eqref{eq:i1t1} and \eqref{eq:i2t2} together mean that $x_{i_1t_1}<x_{i_2t_2}$. Since $x_{i_1t_1}$ is the maximum of all sample returns, this is obviously a contradiction.
\end{proof}

It is important to stress that that neither Theorem \ref{theorem:crm} nor Theorem \ref{theorem:minimax} assume anything about the probability distribution of the asset returns. Further results can immediately be inferred if we now assume that the returns follow an elliptical distribution.

Then equation \eqref{eq:minimaxprob} in Section \ref{sec:es} states the probability of the Minimax problem being feasible. Theorem \ref{theorem:minimax} asserts that the Minimax problem is unfeasible, if and only if there is a dominant-dominated pair of portfolios. Thus we arrive at the following corollaries (where $p(N,T)$ is the probability given by \eqref{eq:minimaxprob}):
\begin{corollary}
For returns drawn from an elliptical distribution the probability of the existence of a pair of portfolios such that one of them dominates the other on the sample $\mathbf{X}$ is $1-p(N,T)$.
\end{corollary}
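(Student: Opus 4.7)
The plan is to combine Theorem~\ref{theorem:minimax} with the feasibility formula \eqref{eq:minimaxprob}. Since \eqref{eq:minimaxprob} gives $p(N,T)$ as the probability that the Minimax problem is feasible on an elliptical sample $\mathbf{X}$, the probability that it is infeasible is $1-p(N,T)$. Theorem~\ref{theorem:minimax} asserts that infeasibility of the Minimax problem is equivalent to the existence of a pair of portfolios with one strictly dominating the other. Taking probabilities on both sides of this \emph{iff} immediately yields that the probability of existence of a strictly dominating pair equals $1-p(N,T)$.

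To obtain the corollary as stated, which speaks of dominance rather than strict dominance, I would insert a short continuity argument bridging $\succeq$ and $\succ$. Any non-degenerate elliptical law is absolutely continuous on the entries of $\mathbf{X}$, so almost surely $\mathbf{X}$ has full rank $\min(N,T)$; in the relevant regime $T\ge N$ (the only one in which $p(N,T)>0$) this means the linear map $\mathbf{w}\mapsto \mathbf{X}^\top\mathbf{w}$ from $\mathbb{R}^N$ to $\mathbb{R}^T$ has trivial kernel. But any pair with $\mathbf{u}\succeq\mathbf{v}$ that fails strict dominance satisfies $\sum_i (u'_i-v'_i)x_{it}=0$ for every $t$, i.e.\ $\mathbf{u}'-\mathbf{v}'$ lies in this kernel. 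Almost surely the only solution is $\mathbf{u}'=\mathbf{v}'$, in which case neither portfolio dominates the other nontrivially.

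Hence, up to a null set of samples, ``a dominating pair exists on $\mathbf{X}$'' and ``a strictly dominating pair exists on $\mathbf{X}$'' describe the same event, and the common probability is $1-p(N,T)$. The only genuine subtlety in the argument is this weak/strict bridge via the rank of $\mathbf{X}$; once it is in place, the corollary is a one-line consequence of Theorem~\ref{theorem:minimax} and \eqref{eq:minimaxprob}.
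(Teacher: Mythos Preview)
Your approach is exactly the paper's: the corollary is presented there as an immediate consequence of Theorem~\ref{theorem:minimax} combined with formula~\eqref{eq:minimaxprob}, with no separate proof given. Your additional rank argument bridging weak dominance $\succeq$ and strict dominance $\succ$ is correct and in fact more careful than the paper, which silently conflates the two; the paper's statement would otherwise be vacuous, since every portfolio trivially satisfies $\mathbf{u}\succeq\mathbf{u}$.
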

\begin{corollary}\label{coroll:prob}
For elliptically distributed returns the probability of the unfeasibility of the portfolio optimization problem under any coherent measure on the sample $\mathbf{X}$ is at least $1-p(N,T)$.
\end{corollary}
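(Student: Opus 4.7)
The plan is to chain together the two theorems of this section with the explicit formula \eqref{eq:minimaxprob}. First I would reinterpret the event whose probability appears in the preceding corollary: by Theorem \ref{theorem:minimax}, the Minimax problem on $\mathbf{X}$ fails to have a solution if and only if there exists a strictly dominating pair of portfolios on $\mathbf{X}$. Under the elliptical assumption, \eqref{eq:minimaxprob} states that the Minimax problem is feasible with probability $p(N,T)$; hence the event $E=\{\exists\,\mathbf{u},\mathbf{v}:\mathbf{u}\succ\mathbf{v}\text{ on }\mathbf{X}\}$ has probability exactly $1-p(N,T)$. This is precisely the content of the previous corollary, which the present statement is built on.

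Next I would apply Theorem \ref{theorem:crm}: for any coherent risk-measure estimator $\hat{\rho}$, the existence of a strictly dominating pair is a \emph{sufficient} condition for the infeasibility of the optimization \eqref{eq:crmopt}. In set-theoretic terms, $E$ is contained in the event $F_{\hat{\rho}}=\{\text{the minimization of }\hat{\rho}(\,\cdot\,|\mathbf{X})\text{ over normalized portfolios has no solution}\}$. Monotonicity of probability then yields
\begin{equation*}
\Pr(F_{\hat{\rho}})\ge \Pr(E)=1-p(N,T),
\end{equation*}
which is the claim.

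Only an inequality (rather than an equality) arises because the converse of Theorem \ref{theorem:crm} is not available for a general coherent measure; as the remark immediately following that theorem points out, other mechanisms may render \eqref{eq:crmopt} unfeasible even when no dominant-dominated pair is present. There is no real obstacle here beyond bookkeeping: one must simply check that the same dominance event on the same sample $\mathbf{X}$ enters all three ingredients, which is guaranteed by the convention spelled out after Definition \ref{def:dominance}. The elliptical assumption plays no role in the implication itself and enters only through \eqref{eq:minimaxprob}, where it is needed to evaluate $\Pr(E)$ in closed form.
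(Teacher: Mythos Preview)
Your argument is correct and matches the paper's own reasoning: the corollary is stated immediately after noting that \eqref{eq:minimaxprob} gives the feasibility probability for Minimax and that Theorem~\ref{theorem:minimax} identifies Minimax infeasibility with the existence of a dominant--dominated pair, so Theorem~\ref{theorem:crm} then yields the lower bound $1-p(N,T)$ for any coherent measure. You have simply made explicit the containment $E\subseteq F_{\hat{\rho}}$ and the monotonicity step that the paper leaves implicit.
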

Moreover, in Section \ref{sec:es} we saw that if $N,T$ go to infinity such that $N/T$ is held constant then for $N/T<1/2$ we have $p(N,T)\to1$, while for $N/T>1/2$ we have $p(N,T)\to0$. Due to Corollary \ref{coroll:prob} $1-p(N,T)$ is a lower bound of the probability of having no solution for an arbitrary coherent measure. Thus we can also state:
\begin{corollary}
If, similarly to the case of ES, there is a sharp feasibility--non-feasibility transition for an arbitrary coherent measure in the limit where $N,T$ go to infinity with their ratio fixed, then the critical value of $N/T$ must be smaller than 1/2 for portfolios of elliptically distributed returns.
\end{corollary}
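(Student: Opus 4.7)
The strategy is to combine Corollary \ref{coroll:prob} with the known asymptotic behavior of $p(N,T)$ from \eqref{eq:minimaxprob}, arguing by contradiction against the hypothetical sharp transition threshold for the given coherent measure.

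First I would fix the coherent risk measure $\rho$ in question, call its hypothetical sharp transition threshold $r^{\star}$, and make precise what ``sharp transition'' means in this context: for every fixed ratio $N/T < r^{\star}$ the probability that \eqref{eq:crmopt} admits a solution tends to $1$ as $N,T\to\infty$, and for every fixed $N/T > r^{\star}$ this probability tends to $0$. Under this assumption the goal is to prove $r^{\star}\le 1/2$.

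Next I would suppose for contradiction that $r^{\star}>1/2$ and pick some ratio $r$ with $1/2<r<r^{\star}$. Along any sequence $(N,T)\to\infty$ with $N/T=r$ fixed, the assumed sharpness gives feasibility of \eqref{eq:crmopt} with probability tending to $1$. On the other hand, by Corollary \ref{coroll:prob} the probability of unfeasibility is bounded below by $1-p(N,T)$, and the asymptotic analysis of \eqref{eq:minimaxprob} recalled in Section \ref{sec:es} shows that $p(N,T)\to 0$ whenever $N/T>1/2$. Hence $1-p(N,T)\to 1$, forcing the probability of unfeasibility of \eqref{eq:crmopt} to tend to $1$ as well, in contradiction with the feasibility conclusion drawn from $r<r^{\star}$. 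Therefore $r^{\star}\le 1/2$, which is the content of the corollary.

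\textbf{Where I expect the work (and the only mild subtlety) to lie.} All the real machinery has already been done: Theorem \ref{theorem:crm} transfers unfeasibility from Minimax to every coherent measure, and the combinatorial identity \eqref{eq:minimaxprob} together with its limiting step-function behavior gives the asymptotic probability estimate. The only thing to be careful about is the precise formulation of ``sharp transition'': one needs the one-sided statement that for $N/T>r^{\star}$ the feasibility probability vanishes, which is exactly what a sharp transition provides; without that, one could only conclude $r^{\star}\le 1/2$ in a weaker sense. It is also worth noting explicitly that the argument yields $r^{\star}\le 1/2$, and that strict inequality (as actually occurs for ES, where the drop below $1/2$ is exponentially small near $\alpha=1$) is not forced by this general argument and would require measure-specific information.
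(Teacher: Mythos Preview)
Your argument is correct and matches the paper's own reasoning, which does not spell out a separate proof but simply records the corollary as an immediate consequence of Corollary~\ref{coroll:prob} together with the step-function asymptotics of $p(N,T)$; your contradiction argument is just a careful unpacking of that same implication. One small slip in your commentary: the one-sided statement your proof actually uses is that feasibility probability tends to $1$ for $N/T<r^{\star}$ (so that choosing $1/2<r<r^{\star}$ yields the contradiction), not the $N/T>r^{\star}$ side you mention in the final paragraph; and your closing observation that the argument only yields $r^{\star}\le 1/2$ rather than strict inequality is well taken.
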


Similar statements can be made for non-elliptically distributed returns, the only difference being that the probability of finding a solution for the Minimax problem is not known explicitly in the general case.

Let us now turn to the financial interpretation of the findings of this section. Essentially, Theorem \ref{theorem:crm} used an arbitrage argument to show that the risk is unbounded from below. Assuming that portfolios $\mathbf{u}$ and $\mathbf{v}$ are such that $\mathbf{u}\succ\mathbf{v}$ we can go $a$ unit long of $\mathbf{u}$ and $a$ unit short of $\mathbf{v}$ so that the portfolio returns on the given sample will be given by the vector $a\left(\mathbf{u}^{\mathrm{T}}\mathbf{X}-\mathbf{v}^{\mathrm{T}}\mathbf{X}\right)$, whose elements are strictly positive. Moreover, the higher the leverage, the bigger the profit, so there is no optimum: we can always do better by increasing the value of $a$.

There is general consensus among financial experts that on real markets such arbitrage opportunities rarely occur and quickly disappear. However, in finite length time series such a dominance may be observed with non-zero probability even between assets without a dominance relation between them on the long run. Clearly, the shorter the time series, the higher the probability of the occurrence of such a finite size dominance.

\section{Summary}
\label{sec:sum}

We have shown that the axioms of Artzner et al. \cite{artzner1,artzner2} imply that for finite $T$ there will always be samples for which the portfolio optimization cannot be carried out under a given coherent risk measure, because the measure becomes unbounded from below. We would like to stress again that the point here is not that such an instability can occur for large values of the ratio of the portfolio size $N$ to the sample size $T$: that is a general feature of any reasonable risk measure. The point is that the optimization under coherent measures can become impossible even for small $N/T$'s. This is a purely random effect, it depends on the sample. The transition from the feasible to the unfeasible region was found, at least for the Minimax and ES measures, to be smooth for finite $N,T$ and became sharp, with a well-defined critical value of $N/T$ separating the two regions, only for $N,T$ going to infinity with a fixed ratio between them \cite{kondor1}. This transition was identified as an algorithmic phase transition in \cite{kondor1}. (On algorithmic phase transitions see \cite{mezard}.) It may be conjectured that the instability for other coherent measures is similar, although at the present time nothing is known about their critical $N/T$ or the behavior around it.

Paradoxically, this instability is related to a very attractive feature of coherent measures: if one of the assets dominates the rest for all times (that is for infinitely large samples), then the coherent measures signal this by going to minus infinity. (This is a simple corollary of our Theorem \ref{theorem:crm} above). It may happen, however, that in a given finite sample a single asset dominates the others even if there is no such dominance relationship between them for infinitely long observation periods, and the coherent measures become unbounded from below also in this case, thereby giving a false signal. The probability of such a false alarm is known for any finite $N$ and $T$ for the Minimax problem \cite{kondor1}, and its behavior around the critical $N/T$ ratio is known for the Expected Shortfall \cite{ciliberti}. These results suggest that the probability is falling off rather fast as we go below the critical value of $N/T$, but in the context of other coherent measures this point would deserve further investigation.

In the somewhat extreme case of the Minimax problem we have been able to show via Theorem \ref{theorem:minimax} that the dominance of a single portfolio is not only sufficient but also necessary for the unfeasibility of the optimization. Theorem \ref{theorem:crm} states that the dominance of one portfolio is sufficient for the instability for any coherent measure, but the necessary conditions are not known even for ES. It is evident that they may be more relaxed than for the Minimax: the instability may set in even if none of the portfolios is dominant, but some of them perform exceptionally well on the given sample.

Although we have made use of all four coherence axioms to prove Theorem \ref{theorem:crm}, it is not clear whether this phenomenon is restricted to the coherent measures. On the other hand, it is obvious that there are risk measures (e.g. the variance or MAD) under which the optimization can always be carried out, provided the sample is large enough. It would be interesting to know the precise conditions that prevent this instability from setting in.

For the sake of simplicity, we have not put any restriction on short positions in this paper. Infinitely large long and short positions, or infinitely large values of the objective function can, of course, only occur if there is no constraint on short selling. As we briefly argued already above, however, possible constraints on the weights can only mask, rather than remove the instability. If we introduce constraints on the weights, they will make the domain within which the optimum is sought finite. It may still happen that one of the assets dominates the rest in a given sample, and then the objective function will monotonically decrease within the allowed domain along some direction. The solution will then be found where this unstable direction runs into the "wall" constituted by the constraint on the weights. For another sample, another unstable direction develops, so ultimately the solutions will not reflect any objective, stable feature of the risk measure, only the nature of random fluctuations. In a way, this is even worse than the appearance of runaway solutions in the unrestricted case: the fact that the solution sticks to the boundary of the allowed domain is much less conspicuous than the divergence of the risk measure, so it may well go unnoticed.

The introduction of the coherent risk axioms was a great achievement. They grasp some of the most important features we expect from any reasonable risk measure. We have seen, however, that coherent measures are sensitive to sample to sample fluctuations, and break down if one of the assets happens to dominate the others in a given sample. We argue that, in addition to mathematical consistency, noise tolerance, or robustness to sample to sample fluctuations under real life conditions where samples are always too small, is also a highly desirable feature of risk measures. It remains to be seen how much modification of the coherence framework is needed in order to accommodate the requirement of robustness.

\section*{Acknowledgements}
One of us (I. K.) is grateful to E. Berlinger for a short but inspiring discussion. This work has been supported by
the ''Cooperative Center for Communication Networks Data Analysis'', a NAP project
sponsored by the National Office of Research and Technology under grant No.\ KCKHA005.

\bibliography{feasibility}
\bibliographystyle{unsrt}


\end{document}